\title{Online Pattern Matching for String Edit Distance with Moves\thanks{This work was supported by JSPS KAKENHI(24700140,26280088) and the JST PRESTO program}}
\author{Yoshimasa Takabatake\inst{1} 
  \and Yasuo Tabei\inst{2}
  \and Hiroshi Sakamoto\inst{1}}
\institute{
  Kyushu Institute of Technology~\email{\{takabatake,hiroshi\}@donald.ai.kyutech.ac.jp}
  \and 
  PRESTO, Japan Science and Technology Agency~\email{tabei.y.aa@m.titech.ac.jp}
}
\newtheorem{prob}{Problem}
\begin{document}

\maketitle

\begin{abstract}
Edit distance with moves (EDM) is a string-to-string distance measure that includes substring moves 
in addition to ordinal editing operations to turn one string to the other.
Although optimizing EDM is intractable, it has many applications especially in error detections. 
Edit sensitive parsing (ESP) is an efficient parsing algorithm that 
guarantees an upper bound of parsing discrepancies between different appearances 
of the same substrings in a string. ESP can be used for computing an approximate EDM as the $L_1$ distance between 
characteristic vectors built by node labels in parsing trees. 
However, ESP is not applicable to a streaming text data where a whole text is unknown in advance. 
We present an online ESP (OESP) that enables an online pattern matching for EDM. 
OESP builds a parse tree for a streaming text and computes the $L_1$ distance between characteristic vectors
in an online manner.
For the space-efficient computation of EDM, OESP directly encodes the parse tree into a succinct representation 
by leveraging the idea behind recent results of a dynamic succinct tree. 
We experimentally test OESP on the ability to compute EDM in an online manner on benchmark datasets, 
and we show OESP's efficiency.
\end{abstract}

\section{Introduction}
Streaming text data appears in many application domains of information retrieval. 
Social data analysis faces a problem for analyzing continuously generated texts.
In computational biology, recent sequencing technologies enable us to sequence individual genomes in a short time, 
which resulted in generating a large collection of genome data. 
There is therefore a strong incentive to develop a powerful method for analyzing streaming texts on a large-scale. 

{\em Edit distance with moves~(EDM)} is a string-to-string distance measure that includes substring moves 
in addition to insertions and deletions to turn one string to the other in a series of editing operations.
The distance measure is motivated in error detections, e.g., insertions and deletions on 
lossy communication channels~\cite{Levenshtein96}, typing errors in documents~\cite{Crochemore94} and evolutionary changes 
in biological sequences~\cite{Durbin98}.                             
Computing an optimum solution of EDM is intractable, 
since the problem is known to be NP-complete~\cite{Muthukrishnan00}. 
Therefore, researchers have paid considerable efforts to develop efficient approximation algorithms 
that are only applicable to an offline case where a whole text is given in advance (Table~\ref{tbl:comp}).  
Early results include the reversal model~\cite{Kececioglu1993,Bafna1996} which takes a substring of 
unrestricted size and replaces it by its reverse in one operation. 
Muthukrishnan and Sahinalp~\cite{Muthukrishnan00} proposed 
an approximate nearest neighbor considered as a sequence comparison with block operations. 
Recently, Shapira and Storer proposed a polylog time algorithm with $O(\lg{N}\lg^*{N})$ approximation ratio 
for the length $N$ of an input text. 


\begin{table}[t]
\begin{center}
{\scriptsize
\caption{Summary of recent pattern matching methods for EDM. The table summaries upper bound for the approximation ratio of EDM, computation time and space for each method.
The space for ESP and OESP is presented in bits. 
$N$ is the length of an input string; $\sigma$ is the alphabet size; $n$ is the number of variables in CFG; $\alpha \in (0,1]$ is a parameter for a hash table; 
$\lg^*$ is the iterated logarithm;
$\lg$ stands for $\log_2$.
}
}
\vspace{-0.5cm}
\label{tbl:comp}
{\scriptsize
\begin{tabular}{l|c|c|c|c}
                           & Appro. ratio & Time & Space & Algorithm \\
\hline
SNN~\cite{Muthukrishnan00} & $O(\lg{N}\lg^*{N})$ & $O(N^{O(1)}+N{\rm polylog}{(N)})$ & $O(N^{O(1)})$ & Offline \\
Shapira and Storer~\cite{Shapira2007} & $O(\lg{N})$ & $O(N^2)$ & $O(N\lg{N})$ & Offline  \\
ESP~\cite{Cormode2007} & $O(\lg{N}\lg^*{N})$ & $O(N\lg^*{N}/\alpha)$ & $N\lg{\sigma}$ & Offline \\
            &                       &                        & $+n(\alpha+3)\lg{(n+\sigma)}$ \\
\hline\hline
OESP & $O(\lg^2{N})$ & $O(\frac{N\lg{N}\lg{n}}{\alpha\lg\lg{n}})$ & $n(\alpha + 1)\lg{(n+\sigma)}$  & Online \\
     &                  &                                         & $+n\lg{(\alpha n)}+5n+o(n)$ 
\end{tabular}
}
\end{center}
\vspace{-1.1cm}
\end{table}

{\em Edit sensitive parsing (ESP)}~\cite{Cormode2007} is an efficient parsing algorithm developed for approximately computing EDM between strings in an offline setting. 
ESP builds from a given string a parse tree that guarantees upper bounds of parsing discrepancies 
between different appearances of the same substring, and then
it represents the parse tree as a vector each dimension of 
which represents the frequency of the corresponding node label 
in a parse tree. 
$L_1$ distance between such characteristic vectors for two strings can approximate the EDM. 
Although ESP has an efficient approximation ratio $O(\lg{N}\lg^*{N})$ and runs fast in $O(N\lg^*{N}/\alpha)$ time for a parameter $\alpha \in (0,1]$ 
for hash tables, its applicability is limited to an offline  case. 
For applications in web mining and Bioinformatics, computing an EDM of massive streaming text data has 
ever been an important task. 
An open challenge, which is receiving increased attention, is to develop a scalable online pattern matching for EDM. 

We present an online pattern matching for EDM. 
Our method is an online version of ESP named {\em online ESP (OESP)}
that (i)~builds a parse tree for a streaming text in an online manner, 
(ii)~computes characteristic vectors for a substring at each position of the streaming text and a query,
and (iii)~computes the $L_1$ distance between each pair of characteristic vectors. 
The working space of our method does not depend on the length of text but the size of a parse tree. 
To make the working space smaller, OESP builds a parse tree from a streaming text and directly encodes it 
into a succinct representation by leveraging the idea behind recent results of an online grammar compression~\cite{Maruyama2013,maruyama2014} 
and a dynamic succinct tree~\cite{NavSad12}. 
Our representation includes a novel succinct representation of a tree named {\em post-order unary degree sequence (POUDS)} that 
is built by the post-order traversal of a tree and a unary degree encoding. 
To guarantee the approximate EDM computed by OESP, 
we also prove an upper bound of the approximation ratio between our approximate EDM and the exact EDM. 

Experiments using standard benchmark texts revealed OESP's efficiencies. 

\section{Preliminaries}
\subsection{Basic notation}
Let $\Sigma$ be a finite alphabet forming texts, and $\sigma=|\Sigma|$. 
$\Sigma^*$ denotes the set of all texts over $\Sigma$, and 
$\Sigma^\ell$ denotes the set of all texts of length $\ell$ over $\Sigma$, i.e. 
$\Sigma^\ell=\{S\in \Sigma^*| |S|=\ell\}$. 
We assume a recursively enumerable set ${\cal X}$ of variables such that $\Sigma \cap {\cal X} = \phi$
and all elements in $\Sigma\cup {\cal X}$ are totally ordered. 
A sequence of symbols from $\Sigma \cup {\cal X}$ is called a string.
The length of string $S$ is denoted by $|S|$, 
and the cardinality of a set $C$ is similarly denoted by $|C|$. 
A pair and triple of symbols from $\Sigma \cup {\cal X}$ are called digram and trigram, respectively. 
Strings $x$ and $z$ are said to be the prefix and suffix of the string $S=xyz$, respectively, 
and $x,y,z$ are called substrings of $S$. 
The $i$-th symbol of $S$ is denoted by $S[i]$ $(1\leq i \leq |S|)$. 
For integers $i$ and $j$ with $1\leq i \leq j \leq |S|$, the substring of $S$ from $S[i]$ to $S[j]$ 
is denoted by $S[i,j]$. 
$N$ denotes the length of a text $S$ and it can be variable in an online setting. 

\subsection{Context-free grammar}
A {\em context-free grammar (CFG)} is a quadruple $G=(\Sigma,V,D,Z_s)$ where $V$ is a finite subset of ${\cal X}$, 
$D$ is a finite subset of $V\times (V\cup \Sigma)^*$ of production rules, 
and $Z_s\in V$ represents the start variable. 
$D$  is also called a {\em phrase dictionary}. 
Variables in $V$ are called nonterminals. 
The set of strings in $\Sigma^*$ derived from $Z_s$ by $G$ is denoted by $L(G)$.
A CFG $G$ is called {\em admissible} if for any $Z\in {\cal X}$ there is exactly one production rule $Z\to \gamma \in D$.
We assume $|\gamma|=2$ or $3$ for any production rule $Z\to \gamma$. 

The parse tree of $G$ is represented as a rooted ordered tree with internal nodes labeled by variables in $V$ and 
leaves labeled by elements in $\Sigma$, and the label sequence of its leaves are equal to an input string. 
Any internal node $Z\in V$ in a parse tree corresponds to a production rule in the form of $Z\to \gamma$ in $D$.
The height of $Z$ is the height of the subtree whose root is $Z$. 

\subsection{Phrase and reverse dictionaries}
For a set $V$ of production rules, 
a {\em phrase dictionary} $D$ is a data structure for directly accessing the 
phrase $S\in (\Sigma \cup V)^*$ for any given $Z\in V$ if $Z\to S\in D$. 
A {\em reverse dictionary} $D^{-1}: (\Sigma \cup V)^*\to V$ 
is a mapping from a given sequence of symbols to a variable.
$D^{-1}$ returns a variable $Z$ associated with a string $S$ if $Z\to S \in D$; 
otherwise, it creates a new variable $Z^\prime \notin V$ and returns $Z^\prime$. 
For example, if $D=\{Z_1\to abc, Z_2\to cd\}$, $D^{-1}(a,b,c)$ returns $Z_1$, 
while $D^{-1}(b,c)$ creates $Z_3$ and returns it.



\subsection{Problem definition}
In order to describe our method we first review the notion of EDM.
The EDM $d(S,Q)$ between two strings $S$ and $Q$ is the minimum number of 
edit operations defined below to transform $S$ into $Q$:
\begin{enumerate}
\item Insertion: A character $a$ at position $i$ in $S$ is inserted, which generates $S[1,i-1]aS[i]S[i+1,N]$,
\item Deletion: A character $a$ at position $i$ in $S$ is deleted, which generates $S[1,i-1]S[i+1,N]$,
\item Replacement: A character at position $i$ is replaced by $a$, which generates $S[1,i-1]aS[i+1,N]$,
\item Substring move: A substring $S[i,j]$ is moved and inserted at the position $k$, which generates $S[1,i-1]S[j+1,k-1]S[i,j]S[k,N]$.
\end{enumerate}

\begin{prob}[Online pattern matching for EDM]\label{prob:1}
For a streaming text $S \in \Sigma^*$, a query $Q \in \Sigma^*$, and a distance threshold $k \geq 0$, 
find all $i \in [1,|S|]$ such that the EDM between a substring $S[i,i+|Q|]$ and $Q$ is at most $k$, i.e. $d(S[i,i+|Q|], Q)\leq k$.
\end{prob}
Cormode and Muthukrishnan~\cite{Cormode2007} presented an offline algorithm for computing EDM. 
In their algorithm, a special type of derivation tree called ESP is constructed for approximately computing EDM.
We present an online variant of ESP.
Our algorithm approximately solves Problem~\ref{prob:1} and is composed of two parts: 
(i) an online construction of a parse tree space-efficiently and 
(ii) an approximate computation of EDM from the parse tree. 
Although our method is an approximation algorithm, it guarantees an upper bound for the exact EDM. 
We now discuss the two parts in the next section. 

\section{Online Algorithm}
OESP builds a special form of CFG and directly encodes it into a succinct representation in an online manner. 
Such a representation can be used as space-efficient phrase/reverse dictionaries, which 
resulted in reducing the working space.
In this section, we first present a simple variant of ESP in order to introduce the notion of {\em alphabet reduction} and {\em landmark}. 
We then detail OESP and approximate computations of the EDM in an online manner. 
In the next section, we present an upper bound of the approximate EDM for the exact EDM.

\subsection{ESP}\label{sec:esp}
Given an input string $S\in\Sigma^*$, we decompose the current $S$ into digrams $WX$ or trigrams $WXY$
associated with variables as production rules, and iterate this process while $|S|>1$ for the resulting $S$.

In each iteration, ESP uniquely partitions $S$ into maximal non-overlapping substrings such that 
$S=S_1S_2\cdots S_\ell$ and each $S_i$ is categorized into one of three types, i.e.,
type1: a repetition of a symbol,
type2: a substring not including a type1 substring and of length at least $\lceil \lg|S|\rceil$,
and type3: a substring being neither type1 nor type2 substrings. 

At one iteration of parsing $S_i$, ESP builds two kinds of subtrees from digram $WX$ and trigram $WXY$, respectively.
The first type is a $2$-tree corresponding to a production rule in the form of $Z\to WX$. 
The second type is a $3$-tree corresponding to $Z\to WXY$.

ESP parses $S_i$ according to its type. 
In case $S_i$ is a type1 or type3 substring, ESP performs the left aligned parsing 
where $2$-trees are built from left to right in $S_i$
and a $3$-tree is built for the last three symbols if $|S_i|$ is odd, as follows:
\begin{itemize} 
  \item If $|S_i|$ is even, ESP builds $Z\to S_i[2j-1,2j]$, $j=1,...,|S_i|/2$,
  \item Otherwise, it builds $Z\to S_i[2j-1,2j]$ for $j=1,...,(\lfloor |S_i|/2 \rfloor-1)$, 
and builds $Z\to  S_i[2j-1,2j+1]$ for $j=\lfloor|S_i|/2\rfloor$.
\end{itemize}
In case $S_i$ is type2, ESP further partitions $S_i=s_1s_2...s_\ell$ ($2\leq |s_j|\leq 3$) 
by the {\em alphabet reduction} described below, and builds $Z\to s_j$ for $j=1,...,\ell$.

After parsing all $S_i$ to $S'_i$, ESP continues this process for the resulted string
by concatenating all $S'_i$ $(i=1,\ldots,\ell)$ at the next level.


{\bf Alphabet reduction:}
Alphabet reduction is a procedure for partitioning a string of type2 into digrams and trigrams.
Given $S$ of type2, consider each $S[i]$ represented as binary integers.
Let $p$ be the position of the least significant bit in which $S[i]$ differs from $S[i-1]$, 
and let ${\it bit}(p,S[i])\in\{0,1\}$ be the value of $S[i]$ at the $p$-th position, where $p$ starts at $0$.
Then, $L[i]=2p+{\it bit}(p,S[i])$ is defined for any $i\geq 2$.
Since $S$ contains no repetition (i.e., $S$ is type2), the string $L$ defined by $L=L[2]L[3]\ldots L[|S|]$ is also type2.
We note that if the number of different symbols in $S$ is $m$, denoted by $[S]=m$, clearly $[L]\leq 2\lg m$.
Then, $S[i]$ is called {\em landmark}
if (i) $L[i]$ is maximal such that $L[i]>\max\{L[i-1],L[i+1]\}$ or 
(ii) $L[i]$ is minimal such that $L[i]<\min\{L[i-1],L[i+1]\}$ and
not adjacent to any other maximal landmark.

Because $L$ is type2 and $[L]\leq \lg |S|$, 
any substring of $S$ longer than $\lg |S|$ must contain at least one landmark.
After deciding all landmarks, if $S[i]$ is a landmark, we replace $S[i-1,i]$ by a variable $X$
and update the current dictionary with $X\to S[i-1,i]$.
After replacing all landmarks, the remaining substrings are replaced by the left aligned parsing.

\subsection{Post-order CFG}
\begin{figure*}[t]
\begin{center}
\includegraphics[width=0.8\textwidth]{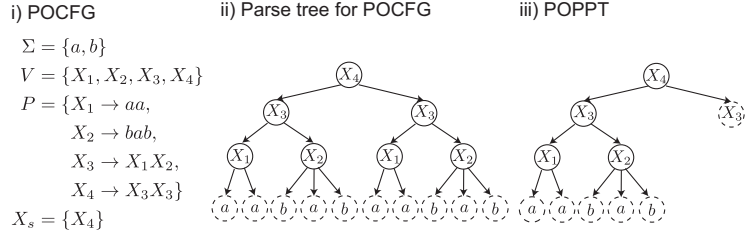}
\end{center}
\vspace{-0.6cm}
\caption{Example of a POCFG, the parse tree of a POCFG, a post-order partial parse tree (POPPT).}
\label{fig:outline}
\vspace{-0.5cm}
\end{figure*}

OESP builds a post-order partial parse tree (POPPT) and directly encodes it into a succinct representation. 
A partial parse tree defined by Rytter~\cite{Rytter03} is the ordered tree formed by 
traversing a parse tree in a depth-first manner and pruning out all descendants under every node 
of nonterminal symbols appearing no less than twice. 
\begin{definition}[POPPT and POCFG~\cite{Maruyama2013}]
A post-order partial parse tree (POPPT) is a partial parse tree 
whose internal nodes have post-order variables. 
A post-order CFG (POCFG) is a CFG whose partial parse tree is a POPPT. 
\end{definition}
Note that the number of nodes in the POPPT is at most $3n$ for a POCFG of $n$ variables, 
because the right-hand sides consist of digrams or trigrams in the production rules 
and the numbers of internal nodes and leaves are $n$ and at most $2n$, respectively. 

Examples of a POCFG and POPPT are shown in Figure~\ref{fig:outline}-i) and iii), respectively. 
The POPPT is built by traversing the parse tree in Figure~\ref{fig:outline}-ii) in depth-first manner and pruning out all the descendants under the node 
having the second $X_3$. The resulted POPPT in Figure~\ref{fig:outline}-iii) consists of internal nodes having post-order variables. 

A major advantage of POPPT is that we can directly encode it into a succinct representation 
which can be used as a phrase dictionary.
Such a representation enables us to reduce the working space of OESP by using it in a combination with a reverse dictionary. 


\subsection{Online construction of a POCFG}
OESP builds from a given input string a POCFG that guarantees upper bounds of parsing discrepancies 
between the same substrings in the string. 
The basic idea of OESP is to (i) start from symbols in an input text, 
(ii) replace as many as possible of the same digrams or trigrams in common substrings 
by the same nonterminal symbols, and 
(iii) iterate this process in a bottom-up manner until it generates a complete POCFG. 
The POCFG is built in an online manner and the POPPT corresponding to it consists of nodes having
two or three children. 

OESP builds two types of subtrees in a POPPT from strings $XY$ and $WXY$. 
The first type is a $2$-tree corresponding to a production rule in the form of $Z\to XY$. 
The second type is a $3$-tree corresponding to a production rule in the form of $Z\to WXY$.

OESP builds a $2$-tree or $3$-tree from a substring of a limited length.
Let $u$ be a string of length $m$. 
A function ${\mathcal L}: (\Sigma \cup V)^m \times [m]\to \{0,1\}$ classifies whether or not 
the $i$-th position of $u$ has a landmark, i.e., 
the $i$-th position of $u$ has a landmark if ${\mathcal L}(u,i)=1$. 
${\mathcal L}(u,i)$ is computed from a substring $u[i-1,i+2]$ of length four. 
OESP builds a $3$-tree from a substring $u[i+1,i+3]$ of length three 
if the $i$-th position of u does not have a landmark; 
otherwise, it builds a $2$-tree from a substring $u[i+2,i+3]$ of length two. 
The landmarks on a string are decided such that 
they are synchronized in long common subsequences to make the parsing discrepancies as small as possible. 

\begin{algorithm}[t]
{\scriptsize
\caption{Online construction of ESP.
$D$ is phrase dictionary, $D^{-1}$ is reverse dictionary, and $q_k$ is queue at level $k$.} 
\label{algo:OESP}
}
{\scriptsize
\begin{algorithmic}[1]
\Function{\sc OESP}{}
\State {$D:=\emptyset$; initialize queues $q_k$}
\While{reading a new character $c$ from an input text}
\State{{\sc ProcessSymbol}($q_1$, $c$)}
\EndWhile
\EndFunction
\Function{{\sc ProcessSymbol}}{$q_k$, $X$}
\State $q_k.enqueue(X)$
\If {$q_k.size() = 4$}
\If {${\mathcal L}(q_k,2) = 0$} \Comment{Build a $2$-tree}
\State $Z:=D^{-1}(q_k[3],q_k[4]);$ $D:=D\cup \{Z\to q_k[3]q_k[4]\}$
\State {\sc ProcessSymbol}$(q_{k+1}, Z)$
\State $q_k.dequeue()$; $q_k.dequeue()$
\EndIf
\ElsIf {$q_k.size() = 5$} \Comment{Build a $3$-tree}
\State $Z := D^{-1}(q_k[3],q_k[4],q_k[5])$; $D := D \cup \{Z \rightarrow q_k[3]q_k[4]q_k[5]\}$
\State {\sc ProcessSymbol}$(q_{k+1},Z)$
\State $q_k.dequeue()$; $q_k.dequeue()$; $q_k.dequeue()$
\EndIf
\EndFunction
\end{algorithmic}
}
\end{algorithm}

The algorithm uses a set of queues, $q_k, k=1,...,m$, where $q_k$ processes
the string at $k$-th level of a parse tree of a POCFG and builds $2$-trees and $3$-trees at each $k$.
Since OESP builds a balanced parse tree, the number $m$ of these queues is bounded by $\lg{N}$. 
In addition, landmarks are decided on strings of length at most four, and the length of each queue is also fixed to five.
Algorithm~\ref{algo:OESP} consists of the functions {\sc OESP} and {\sc ProcessSymbol}. 

The main function is {\sc OESP} which reads new characters from an input text and 
gives them to the function {\sc ProcessSymbol} one by one. 
The function {\sc ProcessSymbol} builds a POCFG in a bottom-up manner. 
There are two cases according to whether or not a queue $q_k$ has a landmark. 
For the first case of ${\mathcal L}(q_k,2)=0$, i.e. $q_k$ does not have a landmark, 
the $2$-tree corresponding to a production rule $Z\to q_k[3]q_k[4]$ in a POCFG 
is built for the third and fourth elements $q_k[3]$ and $q_k[4]$ of the $k$-th queue $q_k$.
For the other case, the $3$-tree corresponding to a production rule $Z\to q_k[3]q_k[4]q_k[5]$ is built for the third, fourth 
and fifth elements $q_k[3]$, $q_k[4]$ and $q_k[5]$ of the $k$-th queue $q_k$. 
In both cases, the reverse dictionary $D^{-1}$ returns a nonterminal symbol replacing a sequence of symbols. 
The generated symbol $Z$ is given to the higher $q_{k+1}$,
which enables the bottom-up construction of a POCFG in an online manner. 

The computation time and working space depend on implementations of phrase and reverse dictionaries. 
The phrase dictionary for a POCFG of $n$ variables can be implemented using a standard array of at most $3n\lg{(n+\sigma)}$ bits of space and $O(1)$ access time. 
In addition, the reverse dictionary can be implemented using a chaining hash table and a phrase dictionary implemented as an array. 
Thus, the working space of OESP using these data structures is at most $n(4+\alpha)\lg{(n+\sigma)}$ bits. 
In the following subsections, we present space-efficient representations of phrase/reverse dictionaries.

\subsection{Compressed phrase dictionary}
\begin{figure*}[t]
\begin{center}
\includegraphics[width=0.6\textwidth]{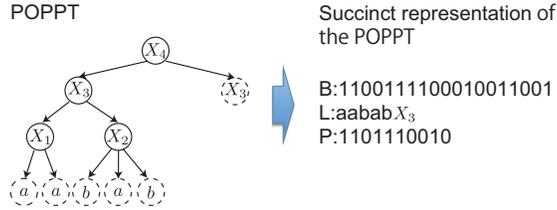}
\end{center}
\vspace{-0.6cm}
\caption{Succinct representation of a POCFG for a phrase dictionary.}
\label{fig:pouds}
\vspace{-0.5cm}
\end{figure*}

OESP directly encodes a POCFG into a succinct representation
that consists of bit strings $B$, $P$ and a label sequence $L$. 
A bit string $B$ is built by traversing a POPPT and putting $c$ $0$s and $1$ for a node having $c$ children 
in the post-order. The final $0$ in $B$ represents the super node. 
We shall call the bit string representation of a POPPT {\em posterior order unary degree sequence~(POUDS)}. 
To dynamically build a tree and access any node in the POPPT, we index $B$ by using the {\em dynamic range min/max tree}~\cite{NavSad12}. 
Our POUDS supports two tree operations: $child(B,i,j)$ returns the $j$-th child of a node $i$; 
$num\_child(B,i)$ returns the number of children for a node $i$. 
They are computed in $O(\lg{m}/\lg\lg{m})$ time while using $2m+o(1)$ bits of space 
for a tree having $m$ nodes.

A bit string $P$ is built by traversing a POPPT and putting $1$ for a leaf and $0$ for an internal node 
in the post-order.  
$P$ is indexed by the rank/select dictionary~\cite{Jacobson89,Navarro12}.
The label sequence $L$ stores symbols of leaves in a POPPT. 

We can access any element in $L$ as a child of a node $i$ in the following. 
First, we compute $c=num\_child(B,i)$ and children nodes $p=child(B,i,j)$  for $j\in [1,c]$. 
Then, we can compute the positions in $L$ corresponding to the positions of these children as $q=rank_1(P,p)$ that returns the number of occurrences of $1$ in $P[0,p]$ in $O(1)$ time.
We obtain leaf labels as $L[q]$.
For a POCFG of $n$ nonterminal symbols, 
we can access the right-hand side of symbols from the left-hand side of a symbol of a production rule 
in $O(\lg{n}/\lg\lg{n})$ time while using at most $n\lg{(n+\sigma)}+5n+o(n)$ bits of space.

\subsection{Compressed reverse dictionary}
We implement a reverse dictionary using a chaining hash table that has a load factor $\alpha \in (0,1]$ in a combination with a phrase dictionary.
The hash table has $\alpha n$ entries and each entry stores a list of integers $i$ 
representing the left-hand side $X_i$ of a rule. 
For the rule $X_i\to S$, the hash value is computed from the right-hand side $S$.
Then, the list corresponding to the hash value is scanned to search for $X_i$ while checking elements referred to as $S$ in a phrase dictionary.
Thus, the expected access time is $O(1/\alpha)$. 
The space for a POCFG with $n$ nonterminal symbols is $\alpha n\lg(n+\sigma)$ bits for the hash table and $n\lg(n+\sigma)$ bits for the lists, 
which resulted in $n(\alpha + 1)\lg(n+\sigma)$ bits in total. 

A crucial observation in OESP is that indexes $i$ for nonterminal symbols $X_i$ are created in a strictly increasing order. 
Thus, we can organize each list in a hash table as a strictly increasing sequence of the indexes of nonterminal symbols. 
We insert a new index $i$ into a list in the hash table, and we append it at the end of the list. 
Each list in the hash table consists of a strictly increasing sequence of indexes. 
To make each index smaller, we compute the difference between an index $i$ and 
the previous one $j$, and we encode it by the delta code, 
which resulted in the difference $i-j$ being encoded in 
$1+\lfloor \lg{(i-j)} \rfloor + 2\lfloor \lg \lfloor 1+\lg{(i-j)\rfloor}\rfloor$ bits.
For all $n$ nonterminal symbols, the space for the lists is upper bounded 
by $n(1+\lg{(\alpha n)} + 2\lg\lg{(\alpha n)})$. bits 
The space for the hash table is $\alpha n \lg{(n+\sigma} + n(1+\lg{(\alpha n)} + 2\lg\lg{(\alpha n)})$ bits in total, 
resulting in $\alpha n\lg(n+\sigma)+n(1+\lg(\alpha n))$ bits by multiplying the original $\alpha$ by a constant. 

Since the reverse dictionary is implemented using the chaining hash and the phrase dictionary, 
its total space is at most $n(\alpha + 1)\lg{(n+\sigma)}+n(5+\lg{(\alpha n)})+o(n)$ bits.
We can obtain the following result.
\begin{lemma}~\label{lem:1}
For a string length $N$, OESP builds a POCFG of n nonterminal symbols and its phrase/reverse dictionaries 
in $O(\frac{N\lg{n}}{\alpha\lg\lg{n}})$ expected time using at most $n(\alpha + 1)\lg{(n+\sigma)}+n\lg{(\alpha n)}+5n+o(n)$ bits of space. 
\end{lemma}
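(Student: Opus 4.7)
The plan is to separately verify the stated time and space bounds by combining the analyses of the component data structures introduced in the preceding subsections.

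For the time bound, I would first count the total number of dictionary operations performed by Algorithm~\ref{algo:OESP} over the entire input. Each call to \textsc{ProcessSymbol} at level $k$ either leaves the queue with at most three elements (doing only constant bookkeeping work) or emits a fresh symbol to level $k+1$; the emissions at level $k$ are exactly in one-to-one correspondence with the production rules built at that level. Because OESP shortens the string by at least a constant factor between consecutive levels—the same guarantee that ESP inherits from alphabet reduction together with left-aligned parsing of the $2$- and $3$-trees—the total number of emissions across all $O(\lg N)$ levels forms a geometric series bounded by $O(N)$. Each emission triggers one reverse-dictionary query, whose expected cost is the product of the expected $O(1/\alpha)$ probes of the chaining hash table and the $O(\lg n/\lg\lg n)$ time to verify each probe by retrieving the right-hand side through the POUDS-based phrase dictionary. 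Multiplying yields $O(\lg n/(\alpha \lg\lg n))$ expected time per emission and $O(N\lg n/(\alpha\lg\lg n))$ overall.

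For the space bound, I would simply sum the component bounds already derived. The phrase dictionary---the POUDS bit string $B$ with its dynamic range min/max index, the leaf-marker bit string $P$ with rank/select support, and the leaf label sequence $L$---uses at most $n\lg(n+\sigma)+5n+o(n)$ bits. The reverse dictionary, consisting of a chaining hash table with $\alpha n$ entries together with delta-encoded strictly increasing lists of nonterminal indexes, uses $\alpha n\lg(n+\sigma)+n(1+\lg(\alpha n))+O(n\lg\lg(\alpha n))$ bits; absorbing the $\lg\lg$ term by rescaling $\alpha$ and adding to the phrase-dictionary bound produces the claimed $n(\alpha+1)\lg(n+\sigma)+n\lg(\alpha n)+5n+o(n)$ bits.

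The main obstacle is the amortized step in the time analysis: a single input character may in the worst case cascade all the way to the root, and I need to argue that the amortized per-character cost is still $O(\lg n/(\alpha\lg\lg n))$. This reduces to showing that OESP's cascading respects the geometric level-by-level contraction of ESP, so that emissions at level $k$ are bounded by a quantity decaying geometrically in $k$, after which constant-time queue operations dominate the remaining bookkeeping at every level where no emission occurs. Once this is in hand, the hash-table probe time, the POUDS access time, and the individual space contributions of $B$, $P$, $L$, the hash table, and the delta-encoded lists are all directly inherited from their stated results, and the proof reduces to an arithmetic summation.
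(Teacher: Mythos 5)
Your proposal is correct and follows essentially the same route as the paper, which states the lemma without a separate proof and lets it follow from the preceding subsections: summing the phrase-dictionary bound of $n\lg(n+\sigma)+5n+o(n)$ bits with the hash-table bound of $\alpha n\lg(n+\sigma)+n(1+\lg(\alpha n))$ bits, and charging each of the $O(N)$ dictionary operations the product of the $O(1/\alpha)$ expected probes and the $O(\lg n/\lg\lg n)$ phrase-dictionary access time. You are in fact more explicit than the paper about the one point it glosses over, namely that the cascading emissions across the $O(\lg N)$ levels form a geometrically decreasing series totalling $O(N)$.
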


\subsection{Online pattern matching with EDM}
We approximately solve problem~\ref{prob:1} by using OESP. 
First, the parse tree is computed from a query $Q$ by OESP. 
Let $T(Q)$ be a set of node labels in the parse tree for $Q$. 
We then compute a vector $V(Q)$ each dimension $V(Q)(e)$ of which 
represents the frequency of the corresponding node label $e$ in $T(Q)$.

OESP builds another parse tree for a streaming text $S$ in an online manner. 
$T(S)[i,i+|Q|]$ is a set of node labels included in the subtree 
corresponding to a substring $S[i,i+|Q|]$ from $i$ to $i+|Q|$ in $T(S)$. 
$V(S)[i,i+|Q|]$ can be constructed for each $i \in [1,|S|-|Q|]$ by 
adding the node labels corresponding to $S[i,i+|Q|]$ and
subtracting the node labels not included in $T(S)[i,i+|Q|]$ from $V(S)[i,i+|Q]]$, 
which can be performed in $\lg{|S|}$ time.

$L_1$-distance approximates the EDM between $V(S)[i,i+|Q|]$ and $V(Q)$, and 
it is computed as $||V(S)[i,i+|Q|]-V(Q)||=\sum_{e \in (T(S)[i,i+|Q|] \cup T(Q))}|V(S)[i,i+|Q|](e) - V(Q)(e)|$. 
We obtain the results with respect to computational time and space for computing the $L_1$ distance 
from lemma~\ref{lem:1} as follows.
\begin{theorem}~\label{thr:1}
For a streaming text $S$ of length $N$, 
OESP approximately solves the problem~\ref{prob:1} in $O(\frac{N\lg{N}\lg{n}}{\alpha\lg\lg{n}})$ expected time using at most $n(\alpha + 1)\lg{(n+\sigma)}+n\lg{(\alpha n)}+5n+o(n)$ bits of space. 
\end{theorem}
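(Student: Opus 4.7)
The plan is to derive Theorem 1 directly from Lemma 1 by accounting for the additional cost of the online pattern-matching phase on top of the POCFG construction. I would split the argument into a space bound and a time bound.

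For the space bound, I would argue that the working memory is dominated by the phrase and reverse dictionaries for the POCFG. The characteristic vector $V(Q)$ for the query has constant size once the query is parsed, and the sliding vector $V(S)[i,i+|Q|]$ can be maintained in place via additive updates, so no extra asymptotic space is required. Hence, the space claim is inherited verbatim from Lemma 1, giving $n(\alpha+1)\lg(n+\sigma)+n\lg(\alpha n)+5n+o(n)$ bits.

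For the time bound, I would proceed in three steps. First, parsing the query $Q$ into its POCFG and building $V(Q)$ costs $O(\tfrac{|Q|\lg n}{\alpha\lg\lg n})$ expected time by Lemma 1, which is subsumed by the leading term. Second, by Lemma 1, the online construction of the POCFG for the streaming text $S$ takes $O(\tfrac{N\lg n}{\alpha\lg\lg n})$ expected time in total (amortized across all levels). Third, I would handle the sliding-window maintenance of $V(S)[i,i+|Q|]$: as already observed in the paragraph preceding the theorem, shifting the window by one position entails adding the node labels newly introduced on the right and subtracting those leaving on the left; since the parse tree has height $O(\lg N)$, only $O(\lg N)$ node labels change per shift. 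Each such label access goes through the succinct POUDS representation and the compressed reverse dictionary, whose operations take $O(\tfrac{\lg n}{\alpha\lg\lg n})$ expected time. Summing over the at most $N$ window positions therefore costs $O\!\left(\tfrac{N\lg N\lg n}{\alpha\lg\lg n}\right)$, which dominates the construction cost. The $L_1$-distance recomputation can be folded into the update by maintaining a running sum $\lVert V(S)[i,i+|Q|]-V(Q)\rVert$, updated in $O(1)$ per coordinate change, and comparing against $k$, so it adds no asymptotic overhead.

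The main obstacle, and the only non-trivial part, is justifying that each sliding update touches only $O(\lg N)$ nodes. I would argue this from the balanced-parsing property of ESP, since the height of the POPPT is $O(\lg N)$, and from the observation that consecutive windows $[i-1,i-1+|Q|]$ and $[i,i+|Q|]$ share all but $O(1)$ leaves whose ancestor paths in the parse tree have length $O(\lg N)$; only the internal nodes on these two ancestor paths can change between consecutive windows. Once this is established, combining it with the per-operation cost from the succinct data structures and adding the construction cost from Lemma 1 yields the stated bound.
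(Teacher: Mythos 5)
Your proposal follows essentially the same route as the paper: the space bound is inherited directly from Lemma~\ref{lem:1}, and the time bound comes from combining the $O(\frac{N\lg n}{\alpha\lg\lg n})$ construction cost with the observation that each window shift changes only $O(\lg N)$ node labels (the tree height), each touched through the dictionaries at cost $O(\frac{\lg n}{\alpha\lg\lg n})$. You actually spell out the sliding-window accounting in more detail than the paper, which merely asserts the $\lg|S|$ update cost before invoking Lemma~\ref{lem:1}.
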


\begin{figure}[t]
\begin{center}
\begin{tabular}{cc}
dna.200MB & english.200MB \\
\includegraphics[width=0.3\textwidth]{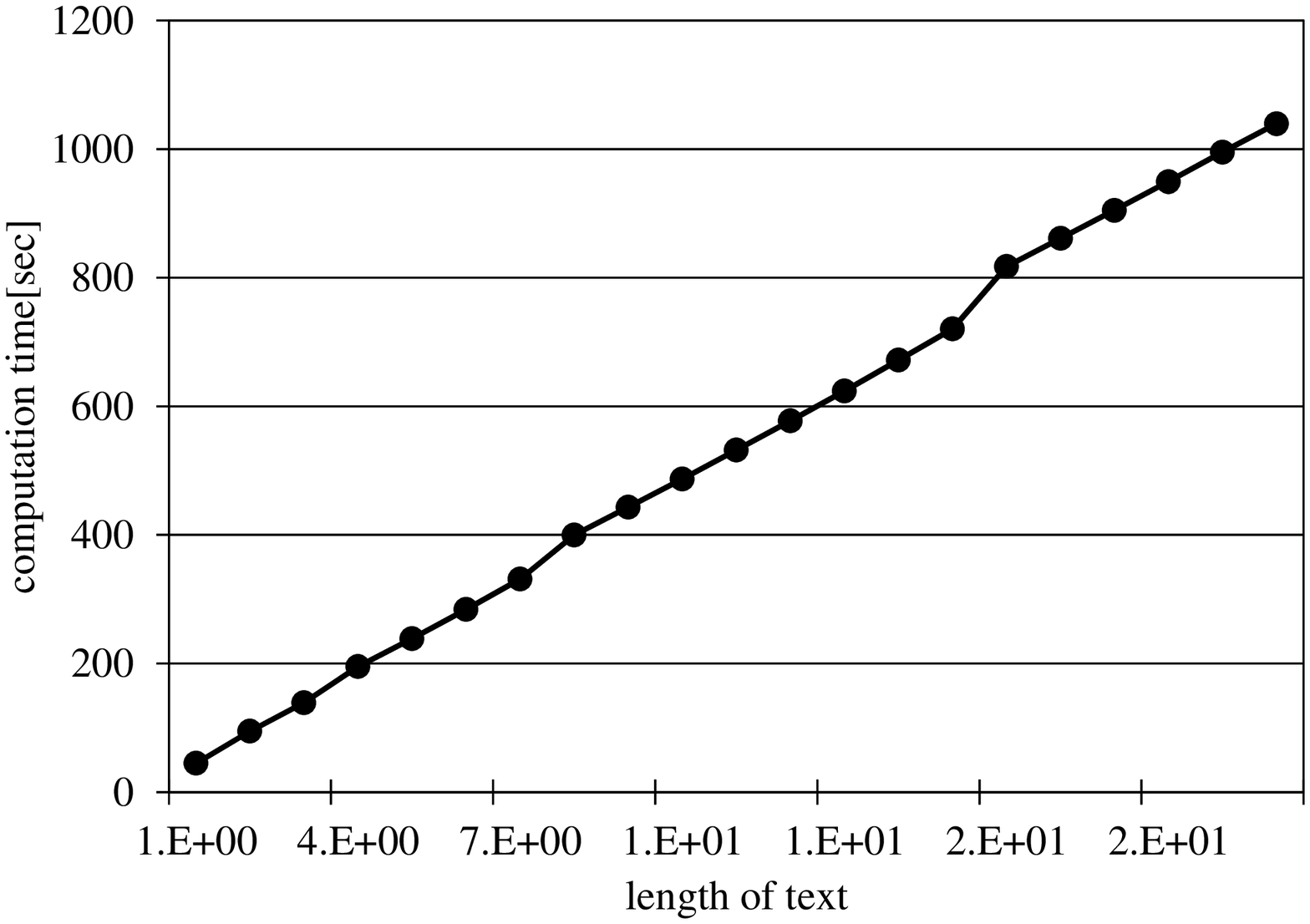} &
\includegraphics[width=0.3\textwidth]{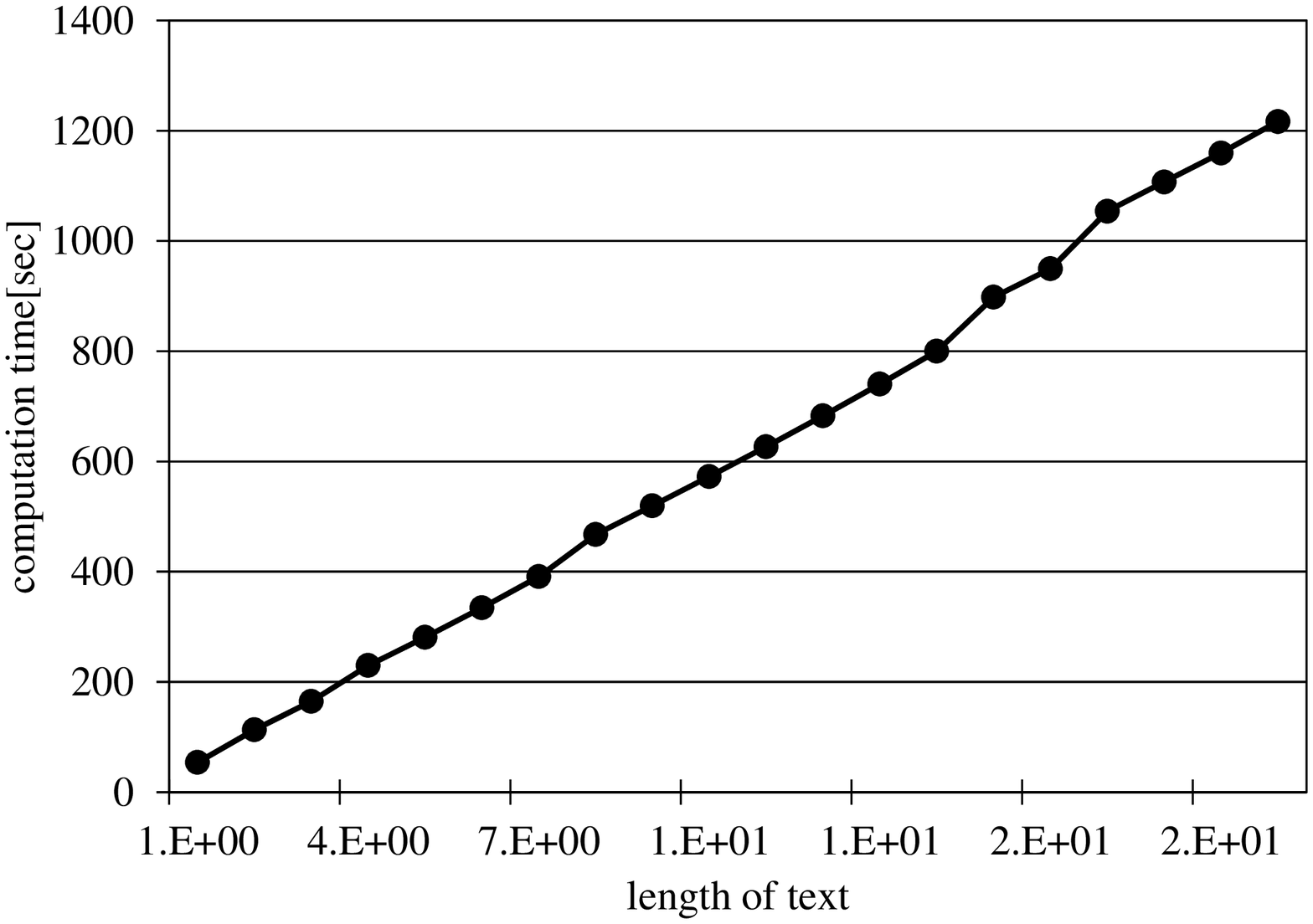} 
\end{tabular}
\end{center}
\vspace{-0.8cm}
\caption{Computation time in seconds for the length of text.}
\label{fig:time}
\vspace{-0.5cm}
\end{figure}

\section{Upper Bound of Approximation}
We present an upper bound of the approximate EDM in this section.

\begin{theorem}\label{upper-bound}
$||V(S)-V(Q)|| = O(\lg^2{m})d(S,Q)$ 
for any $S,Q\in \Sigma^*$ and $m=\max\{|S|,|Q|\}$.
\end{theorem}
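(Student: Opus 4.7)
The plan is to reduce the theorem to a per-edit-operation bound via the triangle inequality. Since the $L_1$ distance is a metric, for an optimal edit sequence $S = S_0, S_1, \ldots, S_k = Q$ realizing $d(S,Q)=k$, we have $\|V(S)-V(Q)\| \le \sum_{i=1}^k \|V(S_{i-1})-V(S_i)\|$. Thus it suffices to show that if $S'$ is obtained from $S$ by a single insertion, deletion, replacement, or substring move, then $\|V(S)-V(S')\| = O(\lg^2 m)$.

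The per-operation bound rests on a locality property of the OESP parsing. Because the function $\mathcal{L}(q_k,\cdot)$ depends only on a window of four symbols of the queue at each level, two strings $s,s'$ that agree outside a contiguous interval $I$ are parsed identically outside an $O(\lg m)$-wide ``desynchronization zone'' surrounding $I$ at each level of the POCFG. I would prove this as a lemma by induction on the level $k$: once the sequences fed to $q_k$ re-agree, the next landmark decision within $O(1)$ queue steps produces a common variable, and this synchronization is inherited at level $k+1$; the base case is immediate from the alphabet-reduction landmark property already invoked for ESP. Since the parse tree has height $O(\lg m)$, the total number of internal nodes whose labels (or frequencies) differ between the two parses is $O(\lg m)$ per level times $O(\lg m)$ levels, i.e., $O(\lg^2 m)$.

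For insertion, deletion, and replacement, the interval $I$ has length $O(1)$, so the locality lemma directly gives $O(\lg^2 m)$ differing node labels. A substring move $S[i,j]\to k$ is handled by viewing it as removing the block $S[i,j]$ from position $i$ and inserting it at position $k$. The essential point is that the interior of the moved block is parsed the same way in $V(S)$ and $V(S')$ up to $O(\lg m)$ positions near each of the two boundaries, because the landmark decisions inside the block are determined by local windows that have re-synchronized. Thus the multiset of internal node labels of the moved block contributes zero to the $L_1$ difference except for the boundary effects, which again account for at most $O(\lg m)$ discrepancies at each of the $O(\lg m)$ levels.

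The hardest step will be the locality lemma for OESP: unlike offline ESP, where each level is parsed after the previous one is fully available and synchronization is governed by the $O(\lg^*)$ alphabet-reduction window, the online queues propagate decisions across levels under partial information, so a single character edit can defer re-synchronization by an entire queue-length worth of lookahead at each level. Showing that this deferred re-synchronization is nevertheless bounded by $O(\lg m)$ positions per level, rather than merely by the queue size or by the total length, is the technical crux and is precisely what yields the $O(\lg^2 m)$ approximation ratio, weaker than the offline $O(\lg N \lg^* N)$ bound but still polylogarithmic.
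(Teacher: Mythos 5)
Your proposal follows essentially the same route as the paper's proof: both reduce to a per-operation bound via the additivity of the edit sequence, both rest on a synchronization lemma showing the discrepancy zone at each parsing level has width $O(\lg m)$ (the paper's induction that $q_i - p_i \leq \lg m + 1$), multiply by the $O(\lg m)$ tree height, and handle the move operation by observing that the interior of the moved block parses identically so only the $O(\lg m)$-length boundary pieces matter. You have also correctly identified the technical crux --- establishing the per-level resynchronization bound, which the paper carries out via the $S_1S_2S_3$ decomposition and the left-aligned-parsing-between-landmarks argument.
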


\begin{proof}
Let $e_1,e_2,\ldots,e_d$ be a shortest series of editing operations such that 
$S_{k+1} = S_k(e_k)$ where $S_1 = S$, $S_d(e_d) = Q$, and $d=d(S,Q)$.
It is sufficient to prove the assumption: there exists a constant $c$ such that 
$||V(S)-V(Q)|| \leq c\lg^2m$ for $R(e)=S$.
$S(i)$ denotes the string resulted by the $i$-th iteration of ESP where $S(0)=S$.
Let $p_i,q_i$ be the smallest integers satisfying $S(i)[p_i]\neq Q(i)[p_i]$ and
$S(i)[|S|(i)-q_i]\neq Q(i)[|Q(i)|-q_i]$, respectively.
We show that $q_i-p_i \leq \lg m+1$ for each height $i$.
This derives $||V(S)-V(Q)|| \leq 2\lg m(\lg m +1)$ because $i\leq \lg m$.

We begin with the case that $e$ is an insertion of a symbol.
Clearly, it is true for $i=0$ since $q_0-p_0\leq 1$.
We assume the hypothesis on some height $i$.
Let $S(i)[p']$ be the closest landmark from $S(i)[p_i]$ with $p'<p_i$ and
$S(i)[q']$ be the closest landmark from $S(i)[q_i]$ with $q_i<q'$.
For the next height, let $S(i+1)=S_1S_2S_3$ such that the tail of $S_1$ derives $S(i)[p_i]$
and the tail of $S_2$ derives $S(i)[q_i]$, and let $Q(i+1)=Q_1Q_2Q_3$ such that $|Q_1|=|S_1|$ and $|Q_3|=|S_3|$.
On any iteration of ESP, the left aligned parsing is performed from a landmark to its closest landmark.
It follows that, for $S_1$, $S_1[j]=Q_1[j]$ except their tails,
for $S_2$, $|S_2| \leq \lfloor\frac{1}{2}(q_i-p_i) \rfloor \leq \lfloor\frac{1}{2}(\lg m+1)\rfloor$, 
and for $S_3$, we can estimate $S_3[j]=Q_3[j]$ for any $j>\lfloor\frac{1}{2}\lg m\rfloor$.
Thus, $q_{i+1}-p_{i+1}\leq 1 + \lfloor\frac{1}{2}(\lg m+1)\rfloor +
\lfloor\frac{1}{2}\lg m\rfloor \leq \lg m+1$. 
Since $d(S,Q)=d(Q,S)$, this bound is true for the deletion of any symbol.
The case that $e$ is a replacement is similar.

Moreover, the bound holds for the case of insertion or deletion of any string of length at most $\lg m$.
Using this, we can reduce the case of move operation of a substring $u$ as follows.
Without loss of generality, we assume $u$ is a type2 substring and let 
$u=xyz$ such that $x/z$ are the shortest prefix/suffix of $u$ that contain a landmark, respectively.
Then, we note that the $y$ inside of $u$ is transformed to a same string
for any occurrence of $u$.
Therefore, the case of moving $u$ from $S$ to obtain $Q$ is reduced to 
the case of deleting $x,z$ at some positions and inserting them into other positions.
Since $|x|,|z|\leq \lg m$, the case of moving $u$ is identical to the case of
inserting two symbols and deleting two symbols, i.e., $||V(S)-V(Q)|| \leq 8\lg m(\lg m +1)$.
\end{proof}

From theorem~\ref{thr:1} and~\ref{upper-bound}, 
we obtain the following main theorem.

\begin{theorem}~\label{main-theorem}
EDM is $O(\lg^2 N)$-approximable by the proposed online algorithm 
with $O(\frac{N\lg{N}\lg{n}}{\alpha\lg\lg{n}})$ expected time
and $n(\alpha + 1)\lg{(n+\sigma)}+n(5+\lg{(\alpha n)})+o(n)$ bits of space.
\end{theorem}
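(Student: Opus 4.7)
The plan is to obtain Theorem~\ref{main-theorem} as a direct composition of Theorem~\ref{thr:1} and Theorem~\ref{upper-bound}, since the former already delivers the time and space bounds of the online algorithm and the latter supplies the approximation ratio. First I would read off from Theorem~\ref{thr:1} that OESP builds, at each position $i$ of the streaming text, the characteristic vector $V(S)[i,i+|Q|]$ and returns $\|V(S)[i,i+|Q|]-V(Q)\|$ in $O\!\bigl(\tfrac{N\lg N\lg n}{\alpha \lg\lg n}\bigr)$ expected time and $n(\alpha+1)\lg(n+\sigma)+n\lg(\alpha n)+5n+o(n)$ bits of space. A one-line rewriting $n\lg(\alpha n)+5n = n(5+\lg(\alpha n))$ matches the form of the space bound stated in the main theorem.

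Next I would invoke Theorem~\ref{upper-bound} on each pair $(S',Q)$ with $S'=S[i,i+|Q|]$. The theorem gives $\|V(S')-V(Q)\|=O(\lg^2 m)\,d(S',Q)$ for $m=\max\{|S'|,|Q|\}$; since both $|S'|$ and $|Q|$ are bounded by $N$ in the setting of Problem~\ref{prob:1}, we have $\lg^2 m \le \lg^2 N$, so the quantity returned by OESP at every candidate position approximates the exact EDM within a factor $O(\lg^2 N)$. This is precisely the approximation guarantee required, and it holds uniformly across all $i\in[1,|S|-|Q|]$.

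The only point that needs to be checked, and the closest thing to an obstacle, is that the $L_1$ distance returned by the online algorithm in Theorem~\ref{thr:1} is exactly the $\|V(\cdot)-V(\cdot)\|$ quantity controlled by Theorem~\ref{upper-bound}. This holds by construction: $V(\cdot)$ is in both places the frequency vector indexed by node labels of the parse tree produced by ESP/OESP, and OESP's incremental update of $V(S)[i,i+|Q|]$ preserves this identification. With this identification in hand, substituting the approximation factor of Theorem~\ref{upper-bound} into the complexity bounds of Theorem~\ref{thr:1} yields the claimed triple of time, space, and approximation ratio, which completes the proof.
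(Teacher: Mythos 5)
Your proposal is correct and follows essentially the same route as the paper: the paper's proof likewise applies Theorem~\ref{upper-bound} to each window $S[i,i+|Q|]$ to get the $O(\lg^2 N)$ approximation ratio (via $m\leq N$) and cites Theorem~\ref{thr:1} for the time and space bounds. Your additional remark identifying the $L_1$ quantity of Theorem~\ref{thr:1} with the $\|V(\cdot)-V(\cdot)\|$ of Theorem~\ref{upper-bound} is a reasonable explicit check that the paper leaves implicit.
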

\begin{proof}
By the theorem~\ref{upper-bound},
we obtain the bound $||V(S[i,i+|Q|]-V(Q)||=O(\lg^2|Q|)d(S[i,i+|Q|],Q)$ for any $i \in [1,|S|-|Q|]$.
The time complexity is proved by the theorem~\ref{thr:1}.
Thus, for the strings $S$ and $Q$ with $N=|S|\geq |Q|$, the result is concluded.
\end{proof}
 

\begin{figure}[t]
\begin{center}
\begin{tabular}{cc}
dna.200MB & english.200MB \\
\includegraphics[width=0.3\textwidth]{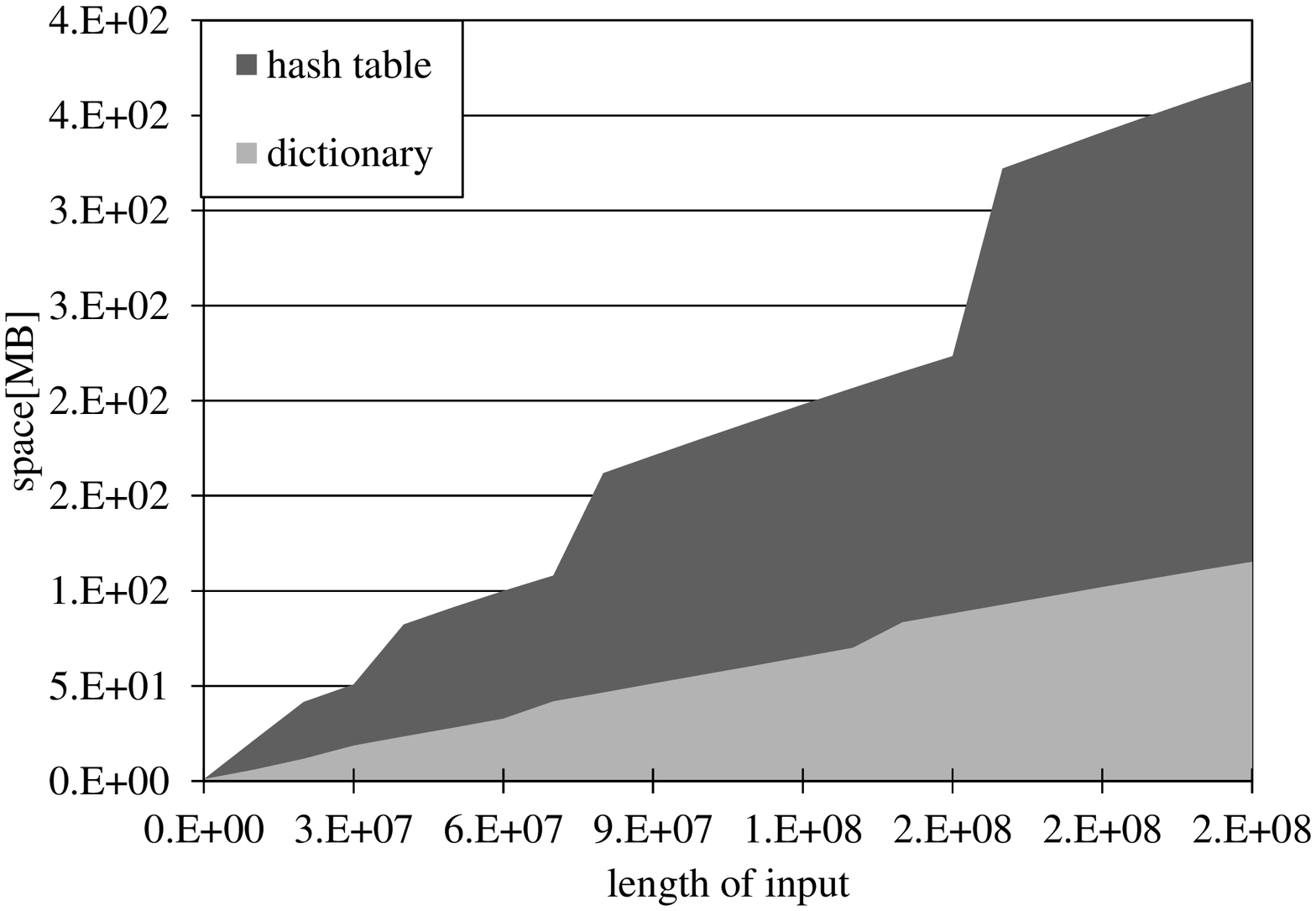} &
\includegraphics[width=0.3\textwidth]{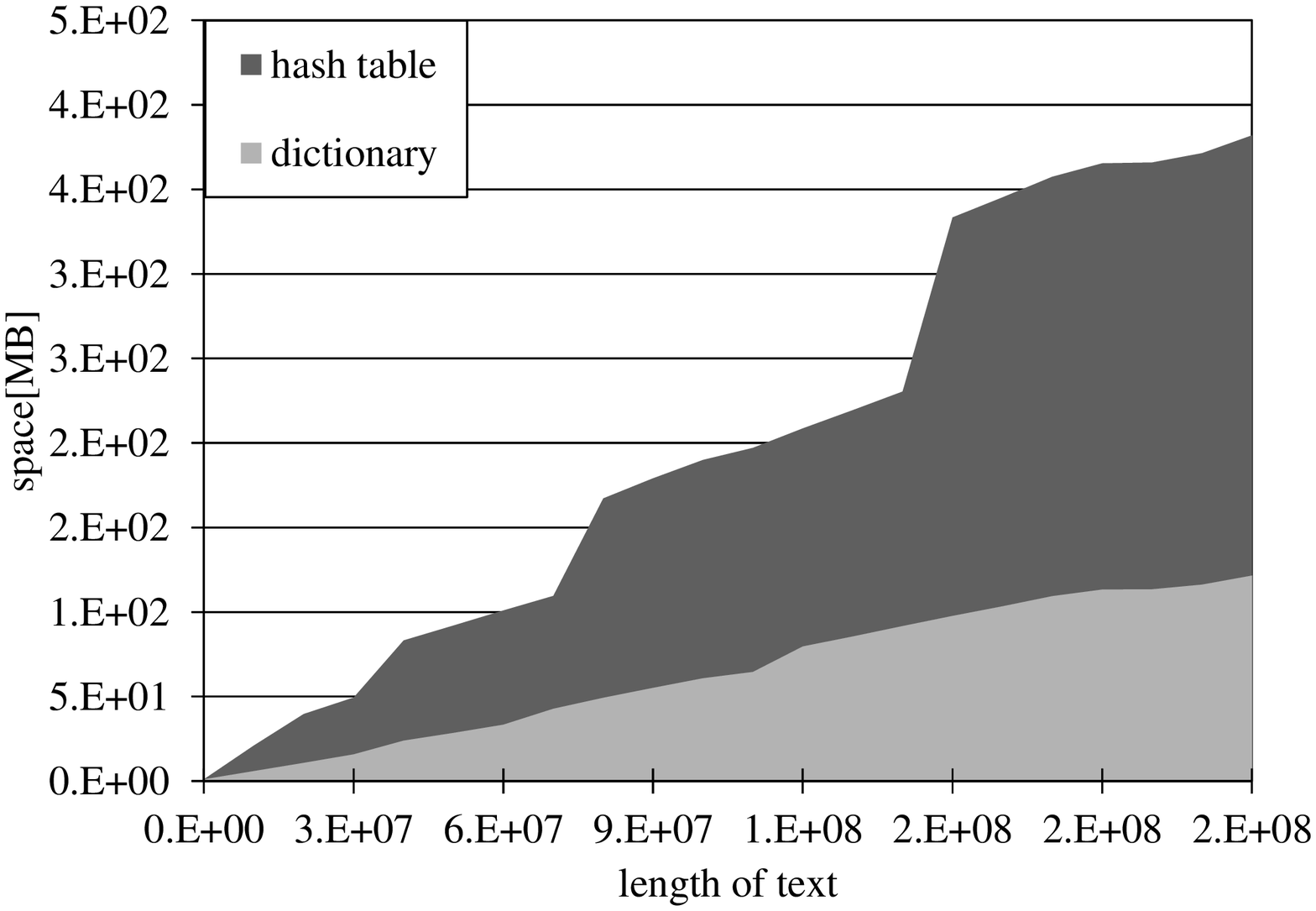} 
\end{tabular}
\end{center}
\vspace{-0.8cm}
\caption{Working space of dictionary and hash table for the length of text.}
\label{fig:compow}
\vspace{-0.2cm}
\begin{center}
\begin{tabular}{cc}
dna.200MB & english.200MB \\ 
\includegraphics[width=0.3\textwidth]{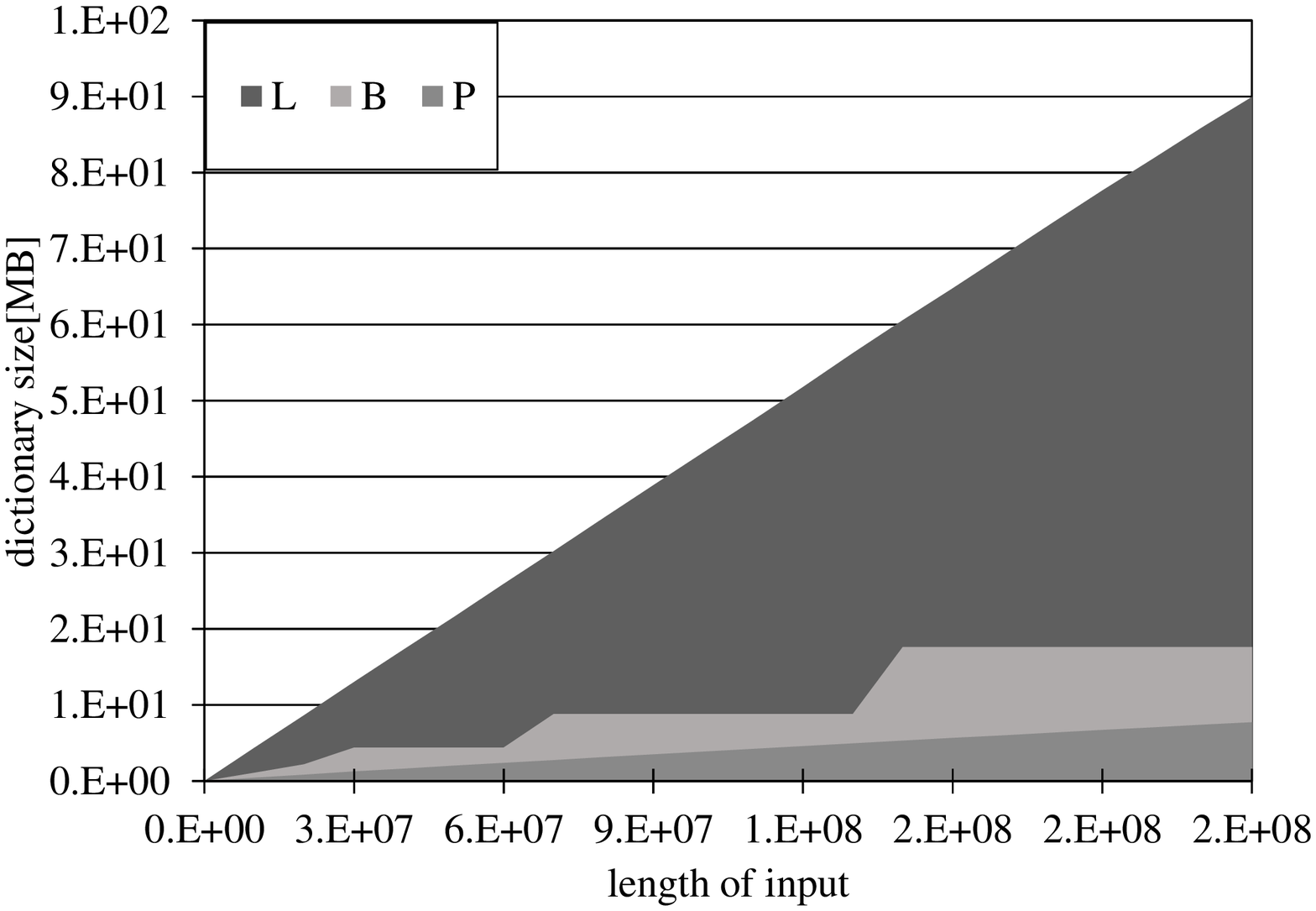} &
\includegraphics[width=0.3\textwidth]{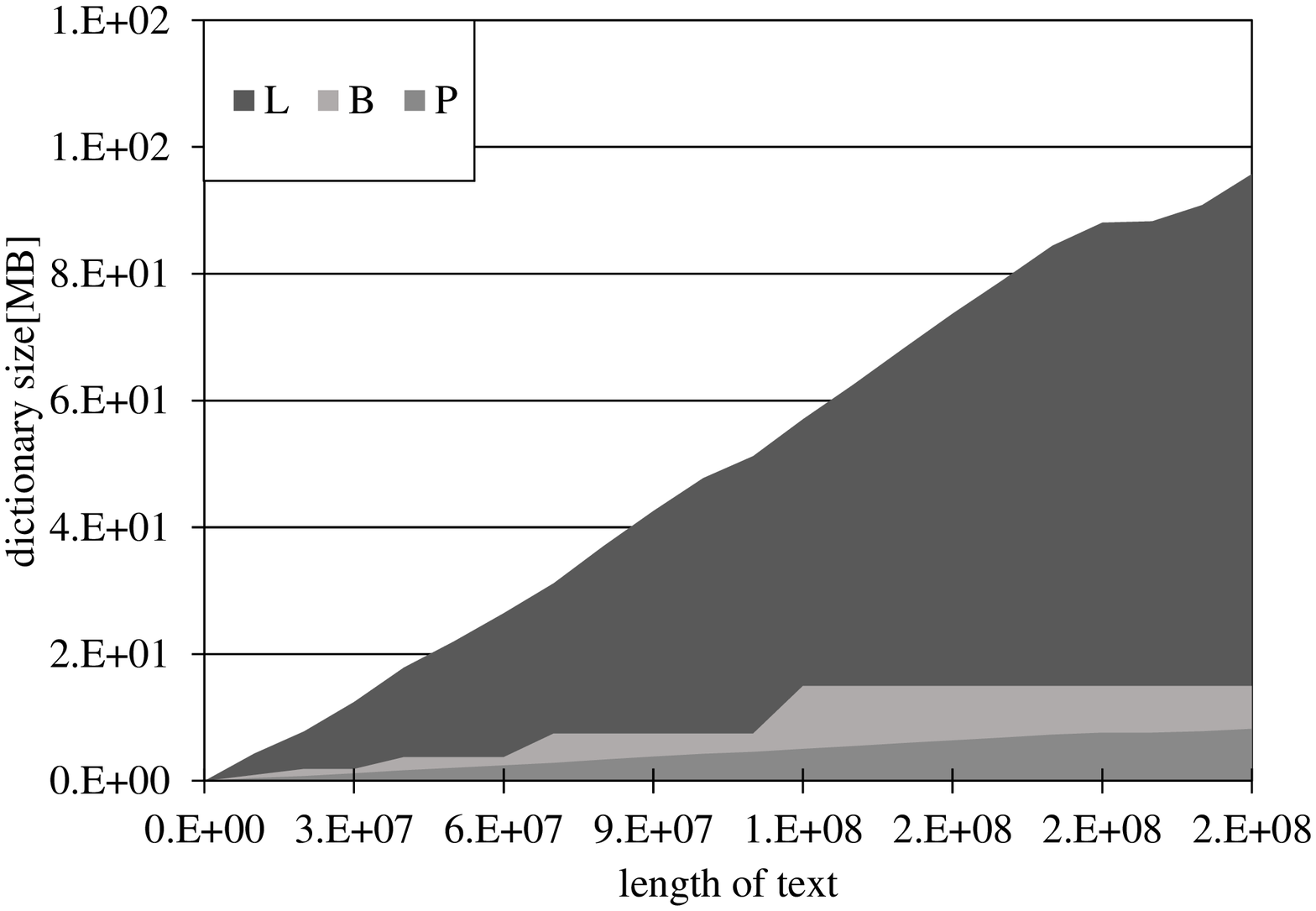} 
\end{tabular}
\end{center}
\vspace{-0.8cm}
\caption{Working space of a POUDS~(B), a label sequence~(L) and a bit string~(P) which organizes a dictionary.}
\label{fig:compod}
\end{figure}
\begin{figure}[t]
\begin{center}
\begin{tabular}{cc}
dna.200MB & english.200MB \\
\includegraphics[width=0.3\textwidth]{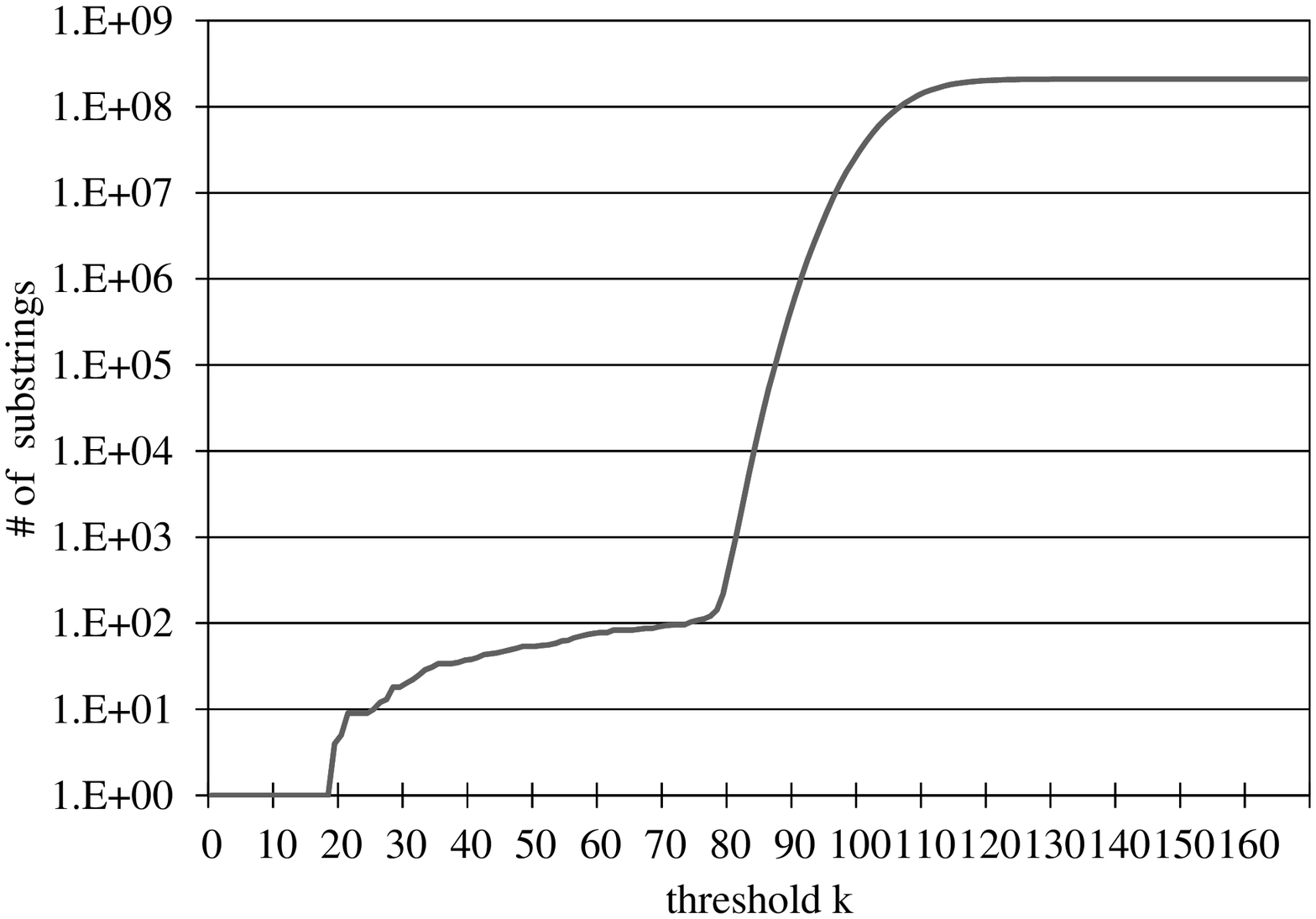} &
\includegraphics[width=0.3\textwidth]{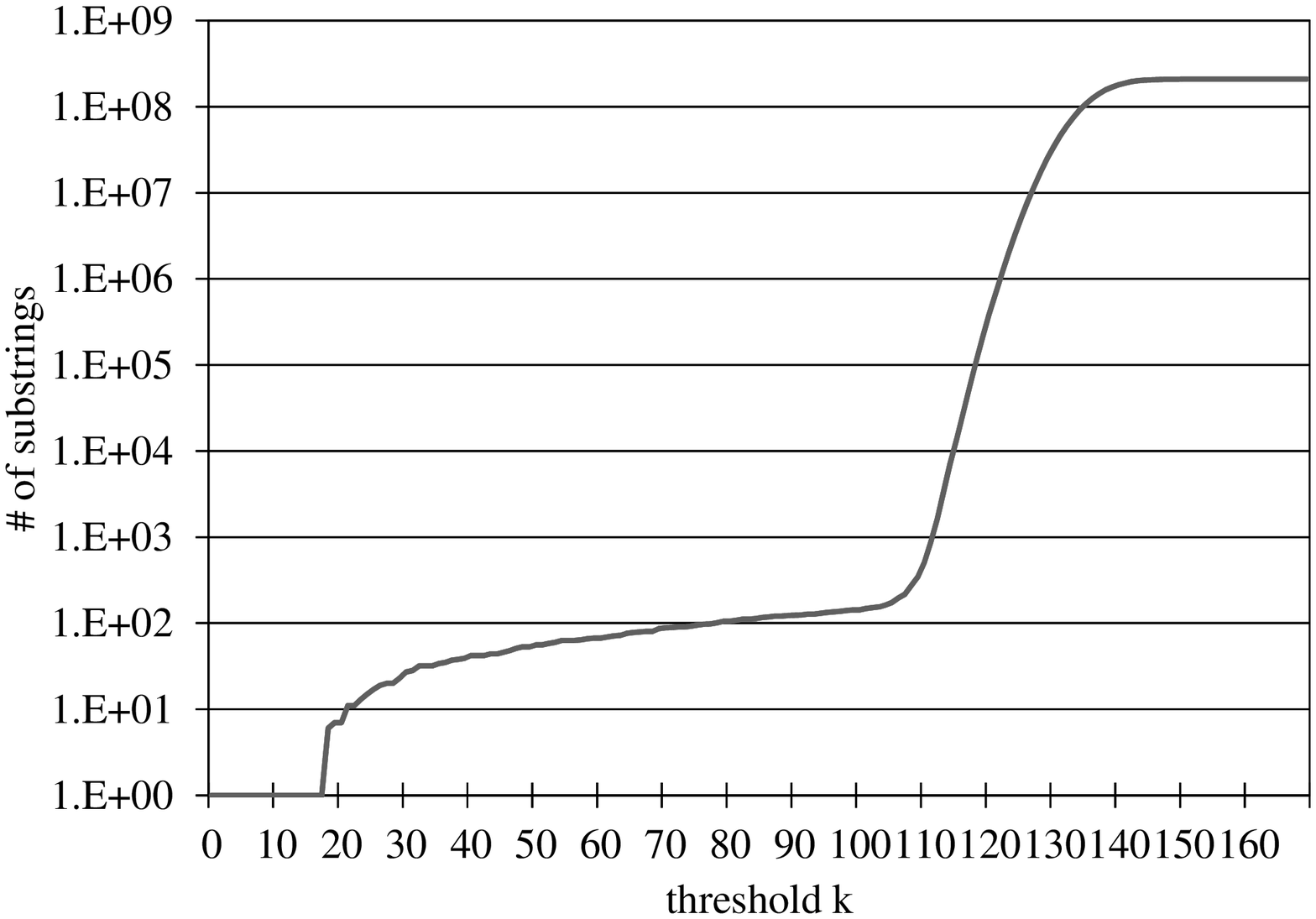} 
\end{tabular}
\vspace{-0.2cm}
\end{center}
\vspace{-0.6cm}
\caption{The number of substrings whose EDM to a query is no more than each threshold.}
\label{fig:thre}
\vspace{-0.6cm}
\end{figure}

\begin{table}[t]
\begin{center}
{\scriptsize
  \caption{Space for POUDS $B$, label sequence $P$ and bit string $P$ organizing a dictionary on dna.200MB and english.200MB.} \label{tab:dic}
}
\vspace{-0.3cm}
{\scriptsize
  \begin{tabular}{r|c|c|c}
                      & L[MB] & B[MB] & P[MB]  \\ \hline
dna.200MB     &$ 89.95$  & $17.62$ & $7.73$ \\
english.200MB &$95.72$  & $14.99$ &  $8.22$ \\
  \end{tabular}
}
\end{center}
\vspace{-1.0cm}
\end{table}

\section{Experiments}
We evaluated OESP on one core of an eight-core Intel Xeon CPU E7-8837 (2.67GHz) machine with 1024GB memory. 
We used two standard benchmark texts dna.200MB and english.200MB downloadable from \url{http://pizzachili.dcc.uchile.cl/texts.html}.
We sampled texts of length $100$ from these texts as queries. 
We also used computation time and working space as evaluation measures. 

Figure~\ref{fig:time} shows computation time for increasing the length of text. 
The computation time increased linearly for the length of text. 

Figure~\ref{fig:compow} shows working space for increasing the length of text. 
The space of dictionary was much smaller than that of hash table. 
The dicionary used $115$MB for dna.200MB and $121$MB for english.200MB, 
while the hash table used $368$MB for dna.200MB and $382$MB for english.200MB. 

Figure~\ref{fig:compod} shows space of a POUDS, a label sequence and a bit string organizing a dictionary for increasing the length of text. 
The space of dictionary and bit string was much smaller than that of the label sequence for dna.200MB and english.200MB.
Table~\ref{tab:dic} details those space. 

Figure~\ref{fig:thre} shows the number of substring whose EDM to a query is at most a threshold. 
There were thresholds where the number of substrings dramatically increases. 
The results showed the applicability of OESP to streaming texts.

\section{Conclusion}
We have presented an online pattern maching for EDM. 
Our method named OESP is an online version of ESP.
A future work is to apply OESP to real world streaming texts. 

\nocite{JalseniusPS13}
\nocite{CliffordS11}
\nocite{Clifford09}
\nocite{Clifford08}
\nocite{Porat09}
\nocite{Clifford13}
\nocite{Clifford12}

{\small
\bibliographystyle{plain}
\bibliography{compress,succinct,cpm,book,other}

\begin{thebibliography}{10}

\bibitem{Bafna1996}
V.~Bafna and P.~A. Pevzner.
\newblock Genome rearrangements and sorting by reversals.
\newblock {\em SIAM Jour. on Comp.}, 25:272--289, 1996.

\bibitem{Clifford08}
R.~Clifford, K.~Efremenko, B.~Porat, and E.~Porat.
\newblock A black box for online approximate pattern matching.
\newblock In {\em CPM}, pages 143--151, 2008.

\bibitem{Clifford12}
R.~Clifford, M.~Jalsenius, E.~Porat, and B.~Sach.
\newblock Pattern matching in multiple streams.
\newblock In {\em CPM}, pages 97--109, 2012.

\bibitem{Clifford13}
R.~Clifford, M.~Jalsenius, E.~Porat, and B.~Sach.
\newblock Space lower bounds for online pattern matching.
\newblock {\em Theor. Comp. Sci.}, 483:68--74, 2013.

\bibitem{Clifford09}
R.~Clifford and B.~Sach.
\newblock Online approximate matching with non-local distances.
\newblock In {\em CPM}, pages 142--153, 2009.

\bibitem{CliffordS11}
R.~Clifford and B.~Sach.
\newblock Pattern matching in pseudo real-time.
\newblock {\em JDA}, 9:67--81, 2011.

\bibitem{Cormode2007}
G.~Cormode and S.~Muthukrishnan.
\newblock The string edit distance matching problem with moves.
\newblock {\em TALG}, 3:2:1--2:19, 2007.

\bibitem{Crochemore94}
M.~Crochemore and W.~Rytter.
\newblock {\em Text Algorithms}.
\newblock Oxford University Press, 1994.

\bibitem{Durbin98}
R.~Durbin, S.~Eddy, A.~Krogh, and G.~Mitchison.
\newblock {\em Biological sequence analysis: probabilistic models of proteins
  and nucleic acids}.
\newblock Cambridge University Press, 1998.

\bibitem{Jacobson89}
G.~Jacobson.
\newblock Space-efficient static trees and graphs.
\newblock In {\em Proc. of FOCS}, pages 549--554, 1989.

\bibitem{JalseniusPS13}
M.~Jalsenius, B.~Porat, and B.~Sach.
\newblock Parameterized matching in the streaming model.
\newblock In {\em STACS}, pages 400--411, 2013.

\bibitem{Kececioglu1993}
J.~D. Kececioglu and D.~Sankoff.
\newblock Exact and approximation algorithms for the inversion distance between
  two chromosomes.
\newblock In {\em Proc. of CPM}, pages 87--105, 1993.

\bibitem{Levenshtein96}
V.~I. Levenshtein.
\newblock Binary codes capable of correcting deletions, insertions and
  reversals.
\newblock {\em Soviet Physics Doklady}, 10:707--710, 1996.

\bibitem{maruyama2014}
S.~Maruyama and Y.~Tabei.
\newblock Fully-online grammar compression in constant space.
\newblock In {\em Proc. of DCC}, pages 218--229, 2014.

\bibitem{Maruyama2013}
S.~Maruyama, Y.~Tabei, H.~Sakamoto, and K.~Sadakane.
\newblock Fully-online grammar compression.
\newblock In {\em Proc. of SPIRE}, pages 218--229, 2013.

\bibitem{Muthukrishnan00}
S~Muthukrishnan and S.~C. Sahinalp.
\newblock Approximate nearest neighbors and sequence comparison with block
  operations.
\newblock In {\em Proc. of STOC}, pages 416--424, 2000.

\bibitem{Navarro12}
G.~Navarro and E.~Providel.
\newblock {Fast, small, simple rank/select on bitmaps}.
\newblock In {\em Proc. of SEA}, pages 295--306, 2012.

\bibitem{NavSad12}
G.~Navarro and K.~Sadakane.
\newblock {Fully-functional static and dynamic succinct trees}.
\newblock {\em {TALG}}, 2012.
\newblock Accepted. A preliminary version appeared in SODA 2010.

\bibitem{Porat09}
B.~Porat and E.~Porat.
\newblock Exact and approximate pattern matching in the streaming model.
\newblock In {\em FOCS}, pages 315--323, 2009.

\bibitem{Rytter03}
W.~Rytter.
\newblock Application of {Lempel-Ziv} factorization to the approximation of
  grammar-based compression.
\newblock {\em Theor. Comp. Sci.}, 302(1-3):211--222, 2003.

\bibitem{Shapira2007}
D.~Shapira and J.~A. Storer.
\newblock Edit distance with move operations.
\newblock {\em JDA}, 5:380--392, 2007.

\end{thebibliography}
}

\end{document}